\newcommand{\ba}{\begin{array}}
\newcommand{\ea}{\end{array}}
\newcommand{\be}{\begin{equation}}
\newcommand{\ee}{\end{equation}}
\newcommand{\const}{{\rm const}}
\newtheorem{prop}{Proposition}[section]
\newtheorem{Th}[prop]{Theorem}
\newtheorem{rem}[prop]{Remark}
\newtheorem{cor}[prop]{Corollary}
\newtheorem{Ex}[prop]{Example}
\numberwithin{equation}{section}
\begin{document}

\title{\bf Hamiltonian form for general autonomous ODE systems: Low dimensional examples}
\author{{\bf  Artur Kobus}\thanks{E-mail: \ 
 a.kobus@uwb.edu.pl}, 
\\ {\footnotesize Uniwersytet w Bia\l ymstoku,
Wydzia{\l} Fizyki}
\\ {\footnotesize ul.\ Cio\l kowskiego 1L,  15-245 Bia\l ystok, Poland}
}

\date{ }

\maketitle

\begin{abstract}
Paper is devoted to maintaining the simple objective: We want to provide Hamiltonian canonical form for autonomous dynamical system reducible to even-dimensional one. Along the road we construct new class of conserved quantities, called effectively conserved, that have dissimilar properties to traditional first integrals (e.g. differential of effectively conserved quantity being a Pfaffian form). We do not confine the discussion to physics; we consider examples from biology and chemistry, giving direct recipe for how to engage the framework in occurring problems. Perspective for future application in geometric numerical methods is given.
\end{abstract}

\noindent {\it Keywords}: classical dynamics, new conservation laws, ordinary differential equations, phase-space geometry.
\par 
\noindent {\it MSC 2010}: 37N05; 70H33; 34A34; 53Z05.
\par 

\section{Introduction}

We consider simple IVP (initial value problem) for autonomous ODE (ordinary differential equation)
\be
\label{ode1}
\dot{\pmb{x}}=\pmb{f} (\pmb{x}), \qquad \pmb{x}(t_0) = \pmb{x}_0,
\ee
given on some open domain $D \subset \mathbb{R}^d$ with $\pmb{f}: D \rightarrow \mathbb{R}^m$; obviously an overset dot stands for a shortcoming denoting time derivation.

Given $\pmb{f} (\pmb{x}) \in \mathcal{C}^0 (D)$, existence and continuity of solutions are guaranteed. Assumption of $\pmb{f} (\pmb{x}) \in \mathcal{C}^r (D),r\geq 1$ ensures uniqueness of the solution and its respective differentiability properties \cite{A1,HK,Ku}.

McLachlan \emph{et al.} \cite{QMR,QMR2} have shown that, in a neighborhood of non-degenerate fixed point of a dynamical system, existence of first integral of (\ref{ode1}) is equivalent with existence and boundedness of skew-symmetric matrix $B(\pmb{x})$ such that
\be
\label{poi}
\pmb{f} (\pmb{x}) = B(\pmb{x}) \nabla H (\pmb{x}),
\ee
where $H(\pmb{x})$ denotes the mentioned integral. Note that $B$ is not determined uniquely, since we can add any solution of homogeneous equation $0=B \nabla H$ to particular solution
\be
\label{rsz}
S^{(P)} = \frac{1}{|\nabla H|^2} \pmb{f} \wedge \nabla H
\ee
provided by Quispel and Capel in \cite{QC}. This explicitly demands $|\nabla H|\neq 0$. Note that if we are already given the problem in a form (\ref{poi}), the assumption of $\nabla H \neq 0$ is redundant.

This article is meant to serve as a continuation of unifying gradient systems that has origin in \cite{QMR2}, however, we use existence and uniqueness theorem for solutions of ODEs allowing in this way treatment of more general, non-potential systems. although approach dedicated to non-autonomous ones is still in development.

Paper is filled with elementary examples and slow build-up towards low dimensional formulation of the idea. Higher dimensions demand a little more care and general proofs of the statements involving non-potential mechanics in arbitrary dimension will be presented elsewhere.

We find main motivation of the undertaken research in geometric numerical treatment of ODEs (Geometric Numerical Integration, or GNI for short, see \cite{B,C,CK,CR,FS, HLW,M,QC}), although this time we confine ourselves to construct proper continuous counterpart of the needed framework.

As conserved quantities play central role in application of geometric methods (the same for Hamiltonian/Poisson structures), we construct new kind of conserved quantity putting the system in canonical Hamiltonian form preserved by the flow during time evolution.

Gradient numerical treatment of \cite{QMR,QC} was indeed applied to systems with first integrals and Lyapunov functions, but for arbitrary systems it generally ceased to function properly because of lack of structural property guiding the evolution of the system.

Several formulations of non-potential mechanics \cite{LS,R} were proposed, still not providing demanded features (the second article cited here gives a canonical structure, but complex). Alternatively, we could turn to formalism presented in \cite{G,GM} or \cite{GGMRR}, but these are conceived of as somewhat elaborate tools, as we try to find something direct and genuinely simple.

Paper is organized as follows: sections $2,3$ are clearly of introductory character giving some basic facts in the topic of Hamiltonian mechanics and relatively subjective point of view on $\mathcal{M}$-systems. Section 4 presents framework of this paper with rudimentary explanation and physical interpretation of these, where also new conserved quantities are found; accompanying examples are given in sec. 5. Final, $6$-th section is devoted to discussion and presenting further perspectives in the subject.

\section{Hamiltonian mechanics}

If $B(\pmb{x})$ in (\ref{poi}) obeys the Jacobi identity, we can term the system Poisson or Hamiltonian (non-canonical). Since Poisson structure matrix $B$ is always of an even rank \cite{W1}, odd dimensional systems are necessarily degenerate (some even-dimensional ones also are; this implies existence of Casimir functions) \cite{K,W2}.

If we face degenerate case, we pick constant values for Casimirs and introduce Clebsh variables on symplectic leaves (Casimir level sets - submanifolds with fixed values of Casimirs, possessing non-degenerate Poisson structure) of initial Poisson manifold. Since that, we can always confine ourselves to non-degenerate Poisson structure, through deliberate change of coordinates leading to canonical Hamiltonian system with Darboux coordinates \cite{A1,BBT}.

These coordinates on $T^*M$, together with real function $H: T^*M \rightarrow \mathbb{R}, H \in \mathcal{C}^2 (D)$ (at least) gives raise to a Hamiltonian system.

On the chosen leaf treated as symplectic manifold $(T^*M,\omega)$ (formally M is the configuration space of the system) in canonical coordinates, we have
\be
\omega = dx^i \wedge dp_i
\ee
existence of which is equivalent with non-degenerate canonical Poisson bracket given by Poisson bi-vector
\be
\pi = - \omega^{-1}: \bigwedge^2 T^* M \rightarrow \mathbb{R}.
\ee
where
\be
\pi (df, dg) = \{f,g\},
\ee
and $f,g \in \mathcal{F} (T^* M)$ Lie algebra of functions.

This also gives rise to first order differential operator
\be
\label{Lambda}
X_H (\cdot ) = \pi (\cdot, dH)
\ee
known as a Hamiltonian vector field.

Now, we call the first integral generator of motion if
\be
\dot{\pmb{x}} = \{\pmb{x},H\} = X_H (\pmb{x})
\ee
understood component-wise. In other words
\be
\label{rh}
\ba{l}
\dot{x}^i=\frac{\partial H}{\partial p_i},\\
\dot{p}_i=-\frac{\partial H}{\partial x^i}.
\ea
\ee
or simply
\be
X_H \rfloor \omega = dH,
\ee
denoting by $\rfloor$ the substitution of a vector field into a form (contraction).

Note that the flow of Hamiltonian vector field preserves canonical symplectic form on the phase-space, which is clearly given by proper Lie-\'Slebodzi\'nski derivative
\be
\mathcal{L}_{X_H} (\omega) = X_H \rfloor d\omega + d(X_H \rfloor \omega)= d(dH)=0,
\ee
by closedness of the symplectic form and nilpotency of exterior derivative $d$.

Basic hydrodynamical interpretation of canonical formalism will also be of some value. Let us consider phase-fluid of many systems with various initial conditions, moving on the phase-space \cite{Ko}. Velocity field of the fluid is clearly given by Hamiltonian vector field. Note that since $H \in \mathcal{C}^2 (D)$, we have $\textrm{div}  \pmb{v} =0$.

It surely obeys the continuity law
\be
\frac{\partial \rho}{\partial t} + \textrm{div} (\rho \pmb{v})=0,
\ee
or in another words
\be
\label{cl}
\frac{d \rho}{dt} + \rho (\nabla \cdot \pmb{v})=0.
\ee
Now because the phase fluid is incompressible
\be
\rho=\const
\ee
on a target energy level set. Now let us ponder $\rho = C$ and then we vary the constant as $C=C(x,p)$ to get the condition
\be
\nabla C \cdot \pmb{v} = 0 \rightarrow C=c H,
\ee
where $c$ is a rightful constant.

Fluid should also undergo Euler's equation
\be
\frac{d \pmb{v}}{dt} =- \frac{1}{\rho} \nabla P
\ee
which in elementary manner leads to Bernoulli's law
\be
\frac{1}{2} c (x^2+p^2) + P (\pmb{x}) = \const,
\ee
from which we can evaluate the pressure function and constant appearing in the formula, as pressure needs to be non-negative.

\section{$\mathcal{M}$-systems of ecology and chemical kinetics}

Hamiltonian (in general Poisson) structure can be met also in ecology\footnote{Sometimes the term {"}m-systems{"} is used to describe molecular systems in biology, however we stick to the meaning proposed in \cite{ML} as accepted by scientific community (e.g. \cite{DM}).}, where evolving populations of different species share resources in a common domain of living. Various environmental factors also can be modeled through some additional terms in differential equations governing the evolution (optimal control problems etc. \cite{BDLMM})

We point out that similar approach can be adapted for chemical kinetics problems, where the role of populations is played by concentrations of various different chemical substances. Resource function here, if it exists, reflects amount of reacting substances.

In both domains the systems share properties of non-negativity, realizability, reducibility and semi-stability \cite{B,CBHB,FS}.

Mentioned processes may involve great number of variables, if they are to be described exactly. For our purposes we will stick to ODE models, as fair enough to describe  phenomena with satisfying accuracy, yet simple not to complicate things unnecessarily, so we use the assumptions
\begin{enumerate}
\item Reagents (species) are well-mixed (distributed homogeneously), otherwise the problem would be inhomogeneous in space, hence yielding PDE (reaction-diffusion problem, see e.g. \cite{DMR}) instead of ODE.
\item Concentrations of substrates (species) are big enough to prevent stochastic behavior during reacting (coexisting - abandonment of this assumption would lead to Wiener processes). Otherwise we would get SDE instead of ODE.
\end{enumerate}

Generally we could consider few types of interaction between species - parasitic invasion, competition for resources, etc. For example, competition of $U$ and $V$ given by the system
\be
\label{is}
\ba{l}
\dot{u}=f(u,v),\\
\dot{v} =g(u,v)
\ea
\ee
would undergo conditions $f_u<0, g_v<0, f_v<0, g_u<0$ as to reflect fact that consumption of a nutrient/depletion of a resource by one species would prevent the other from doing the same, and also describing the competition between the members of the same population.

Specific kind of interaction appears in systems of Rosenzweig-MacArthur \cite{BC,T} type, involving predator and prey coexistence. In (\ref{is}) we would get
\be
\label{rm}
\ba{l}
f(u,v)= ru\big(1-\frac{u}{K} \big) - vh(u),\\
g(u,v) =v(- \beta + \alpha h(u) ,
\ea
\ee
where $h(u)$ denotes number of prey caught by predator per unit time and $\alpha, \beta,r,K$ are constant environmental parameters of the system.

Functions as $h(u)$ are known as functional response of predator to variation in prey population. Basic types of these were classified by C.S.Holling in three categories: I. Linear, II. Hyperbolic (saturation), III.  So-called $\theta$-sigmoid \cite{T}.

Various parameters in the equations of population dynamics can be made into variable ones, often leading to their re-appearance as new dynamical (dependent) variables. To study these and more of not only single-population models like e.g. chemostat system with continuous/batch cultivated bacteria, Monod equation, grazers-vegetation cycles, Ivlev or Ayala-Gilpin-Ehrenfeld model of population growth, epidemic and endemic (SIR/SIS) models, or even more complicated problems associated with optimal control of invasive species and harvesting of populations, the reader is referred to literature on the subject \cite{AAE,BDLMM,BC,M,S,T}.

Of course, some of these systems admit Hamiltonian/Poisson representation, where we will conceive of the resources function $\mathcal{M} (\pmb{x},t)$ as the basic object. It serves as a generator for equations of motion
\be
\dot{\pmb{x}}  = B(\pmb{x},t) \nabla \mathcal{M}.
\ee
\begin{rem}
\label{ci}
Similarly we can formulate Hamiltonian-like $\mathcal{M}$-system substituting $J$ in place of $B$ and properly transforming variables. However, big difference occurs when we ponder both Poisson and Hamilton formulations of the same problem: the concentration variables for different species should be of non-negative values, but in Hamiltonian case are not (incompressible fluid on $\mathbb{R}^m$). Because of that, we can interpret the compressibility of the phase fluid in the Poisson case as partially arising from constraining the phase-space to be $\mathbb{R}^m_+$.
\end{rem}

Set of chemical reactions is termed reaction network. Having
\be
A_i+E_i \rightarrow P_i+B_i, \qquad i=1,\ldots,n,
\ee
we call species on the left the reactants and species on the right products of the reaction.

When all stoichiometric coefficients are equal to one, we call such reaction an elementary reaction. Note that every particular reaction can be written as elementary reaction when we substitute $ A_1 X = X+\ldots +X$ ($A_1$ times).

Chemical reactions formulated as populations dynamics problem use the mass action law \cite{CBHB,S}: \emph{At constant temperature, for any elementary reaction, its rate is proportional to concentrations of reactants}.

Matrix formulation is also accessible to problems concerning mass action law, see e.g. \cite{B,CBHB}. This simple rule is underlying differential description of such reactions as given in the Michaelis-Menten model, Hill enzymatic equation, or Robertson network \cite{St}.

One of fundamental features of mass-action knetics is that it produces differential equations with polynomial non-linearities. This also means, that when we encounter such set of equations, we may find reaction network obeying these. Such process is referred to as realizibility of mass action kinetics \cite{CBHB}. Example of this procedure may be so-called Lotka-Volterra reactions, retrieved from (\ref{LV}) , see \cite{B,BHM}.

\section{Effectively Hamiltonian description of non-conservative systems}

We will proceed case-wise, slowly building efficient machinery, to achieve our goal as outlined in \cite{QMR,QMR2}.

We start with elementary example of a system
\be
\ba{l}
\dot{x} = p,\\
\dot{p} = - V'(x) - D(x,p),
\ea
\ee
where we need two pieces of dynamical structure on the phase space: potential function $V(x)$ (given only in terms of generalized coordinates) and $D(x,p)$ - non-potential force introducing energy flow out of and into the system
\be
\dot{H}=-D(x,p) p.
\ee

Theory of ODEs confirms that if we are able to find unique solution to the problem, then phase trajectories do not intersect, in order to keep vector field generating the ODE well defined. This means, provided sufficiently differentiable $V$ and $D$, that any given instant $t$ in time connects by $1:1$ correspondence to some $(x,p)$, and vice-versa.

We adjoin dissipative forces to the system, counting their work as positive. Then decrement of the energy of potential part of the system is exactly balanced by addition of the redundant variable
\be
\label{w}
w(t) = \int_{t_0}^t D(x(t'),p(t')) dx(t'),
\ee
called a reservoir variable. In the above formula we perform the Riemann-Stieltjes integral guaranteed there exists some form of $1:1$ correspondence between dynamical variables.

\begin{Th}
The quantity
\be
K=H+w
\ee
is conserved.
\end{Th}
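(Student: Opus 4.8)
The plan is to show directly that $\dot K = 0$ along every solution of the system, so that $K$ is constant on each trajectory (and equal to its value at $t_0$). First I would take the energy of the potential part to be $H = \tfrac12 p^2 + V(x)$ and recompute its rate of change along the flow: using $\dot x = p$ and $\dot p = -V'(x) - D(x,p)$ one finds $\dot H = p\dot p + V'(x)\dot x = -D(x,p)\,p$, which is precisely the energy-balance relation already recorded in the text. This step isolates exactly the rate at which energy is leaving (or entering) the potential subsystem.

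Second, I would differentiate the reservoir variable $w(t)$ defined in \rf{w}. Because $V$ and $D$ are assumed sufficiently differentiable, the solution $t'\mapsto(x(t'),p(t'))$ is $\mathcal{C}^1$, so the integrator $x(t')$ has derivative $\dot x(t') = p(t')$ and the integrand $t'\mapsto D(x(t'),p(t'))$ is continuous; hence the Riemann--Stieltjes integral in \rf{w} collapses to the ordinary integral $w(t) = \int_{t_0}^t D(x(t'),p(t'))\,p(t')\,dt'$. The fundamental theorem of calculus then yields $\dot w(t) = D(x(t),p(t))\,p(t)$.

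Adding the two contributions gives $\dot K = \dot H + \dot w = -D(x,p)\,p + D(x,p)\,p = 0$, so $K = H + w$ is a constant of motion, with $K \equiv H(x_0,p_0)$ fixed by the initial data. The conclusion is that the work done against (or by) the dissipative force, accumulated in $w$, exactly compensates the drift of $H$.

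The only genuinely delicate point is the one underpinning the second step: differentiating the Stieltjes integral relies on the $1:1$ correspondence between $t$ and $(x,p)$ invoked just before the statement, i.e. on enough regularity of $V$ and $D$ to guarantee existence, uniqueness and $\mathcal{C}^1$ dependence of the solution, so that $dx(t') = p(t')\,dt'$ holds pointwise along the trajectory. Once that is granted the remainder is an immediate cancellation; the same bookkeeping — energy balance of the core system against accumulation in the reservoir — is what will need to be extended when additional variables or higher-dimensional versions are treated later in the paper.
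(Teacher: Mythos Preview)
Your proof is correct and follows exactly the paper's own approach: both compute $\dot K = \dot H + \dot w = -D(x,p)p + D(x,p)p = 0$, with your version supplying the extra justification for differentiating the Riemann--Stieltjes integral that the paper leaves implicit. The paper additionally remarks that $K$ is not a well-defined function on phase space since it depends on the path taken, a caveat you touch on only indirectly through the initial-value observation.
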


\begin{proof}
We may easily check
\be
\dot{K} = \dot{H}+ \dot{w} = -D(x,p)p+D(x,p)p=0,
\ee
however, it is not a well-defined function, since it depends on path taken by the system.
\end{proof}

{\bf{Above result is important not because of its complicated nature, but because of its simplicity. We accessed novel type of conserved quantity that is rather of no use in pure theoretical considerations. As outlined in the introduction, it is of huge practical/computational value.}}

Let us observe that $K$ possesses extremely trivial physical interpretation: it is just initial energy (provided that $w(t_0)=0$). Since, for general $(x_0,p_0), w(t_0)=\const$, we have $K=H(x_0,p_0)+w(t_0)$ and therefore $K$ is a smooth function of initial values provided that $H$ is smooth.

We call differential form not being the differential of any function a (pure) Pfaffian form \cite{PP}. Example of such quantity is
\be
\dj w = D(x,p) dx,
\ee
where we understand its derivative of $w$ with respect to $x$ as $\frac{\partial w}{\partial x} := \frac{1}{\dot{x}} \frac{\dj w}{dt}$, rather a differential quotient, than rightful derivative. In future, we will denote by $w_x$ expression standing next to $dx$ in $\dj w$, the same for $w_p$.

Therefore we can define
\be
\dj K = \dj H + \dj w,
\ee
as an improper (Pfaffian) differential form.

In the above definition we paid attention to the fact that Hamiltonian is no longer preserved. Its decrement is exactly the increment of $w$ with opposite sign. Since change of Hamiltonian now obviously depends on the path taken by the system on phase space, we cannot even claim that Hamiltonian is still potential-type function, or properly defined function at all, although, as a shortcoming, we will use the term {"}potential part of the generator of motion{"} with respect to Hamiltonian (so, strlctly speaking, $dH$ is becoming $\dj H$).

From earlier considerations we may generalize (\ref{Lambda}) to
\be
X_K= \pi (\cdot, \dj K)
\ee
giving rise to Poisson bracket
\be
\{f,K\} = X_K(f),
\ee
for any function given on the phase space. We can conceive of $K$ as the generator of non-potential motion: its Poisson bracket with canonical coordinates gives proper equations of motion
\be
\ba{l}
\dot{q} = \{q, K\} = \pi (dq, \dj K) = p,\\
\dot{p} = \{p, K\} = \pi (dp, \dj K) =  - V'(q) - D(q,p)
\ea
\ee
moreover
\be
\dot{w} = \{w,K\}= \pi (\dj w, \dj K) = D(q,p) p,
\ee
which is indeed the case.

With this setting in mind we can derive new formulas for vector field algebra. Considering $K$ generator of motion and $f$ well-differentiable function we obtain
\be
[X_f,X_K]_L= - X_{\{f,K\}} + \textrm{div} \pmb{v}  X_f,
\ee
seeing that compressible terms which are responsible for discontinuities are causing an anomaly in the vector field algebra to occur.

It is possible to find similar formula for pair of reservoir-containing $K,L$, say. However, it is not very useful in the context of Hamiltonian description of mechanical systems since there is only one of these needed to govern dynamics. Situation dramatically changes in Nambu, or generalized Nambu mechanics (e.g. \cite{N}), where dynamics is given in terms of few of such vector fields.

We will stick to this working approach, especially since it guarantees
\be
\omega = dx \wedge dp
\ee
as a symplectic form. During former considerations it was preserved by the flow of Hamiltonian vector field $X_H$, now it satisfies
\be
\mathcal{L}_{X_K} (\omega) = X_K \rfloor d\omega + d (X_K\rfloor \omega) = d(\dj K) = d(\dj H +\dj w) = d(-\dj w +\dj w) =0.
\ee

$X_K$ is defined unambiguously throughout the phase-space as a section over $T(T^*M)$, hence providing phase trajectories not crossing each other.

When it comes to the value of $K$, it is determined by providing initial conditions $(\pmb{q}_0,\pmb{p}_0)$. Hence $K \in \mathcal{F} (T^*M (t_0))$, so its constant value is uniquely determined by the state of the system in initial moment. As a function of the flow $K$ may be given as
\be
K = \int_{t_0}^t \pmb{v} \rfloor \omega
\ee
making its dependence on initial conditions much less manifest.

A little bit more sophisticated is simultaneous use of two reservoirs for the system
\be
\ba{l}
\dot{x} = p + E(x,p),\\
\dot{p} = - V'(x) - D(x,p),
\ea
\ee
yielding
\be
\dj K = (p+E(x,p))dp + (V'(x)+D(x,p)) dx
\ee
and accordingly
\be
X_K=(p+E(x,p)) \frac{\partial}{\partial x} - (V'(x)+D(x,p)) \frac{\partial}{\partial p}.
\ee

Since we play with ODE system, when  we assume that $A(\pmb{x}) \nabla K = \pmb{f} (\pmb{x}) \in \mathcal{C}^r (D),\\ r\geq 1$, theorem on existence and uniqueness of solutions is in power \cite{A1,HK,HS}. Therefore there is a $1:1$ correspondence between every moment in time, and points in phase-space. Trajectories on the phase space of the system are obviously not-crossing. Hence we can perform in an unambiguous sense any integral of a function of variables of the system with respect to some of these variables (or time) as a Riemann-Stieltjes integral.

Since that we can write
\be
K= \frac{1}{2} p^2 + V(x) +w+z,
\ee
where the reservoirs are defined by
\be
\ba{l}
w=\int_{t_0}^t D(x(t'),p(t')) dx(t'),\\
z=\int_{t_0}^t E(x(t'),p(t')) dp(t'),
\ea
\ee
so that $\dot{K}=0$.

Now, in order to make current discussion as similar to the conservative case as possible, we focus for a moment on hydrodynamical analogy, starting from continuity equation
\be
\frac{d \rho}{d t} + \rho \nabla \cdot \pmb{v}=0,
\ee
hence
\be
\rho = C e^{\int_{t_0}^t D_p(q,p) dt'}
\ee
where $C$ is a constant and lower index denote derivative with respect to an argument. 

Integral in the exponent does not cause any trouble, since all fluid's particles obey equations of motion, hence we may again apply the Riemann-Stieltjes integral.

Note that we can consider $C$ as depending on canonical variables, where from continuity equation we get the constraint $\pmb{v} \cdot \nabla C$, so $C$ may depend on $K$ value, so on the phase trajectory of interest.

Additionally we have Bernoulli's law (from Euler's equation)
\be
\frac{1}{2} \rho \pmb{v}^2 + P = \const.
\ee

Taking an example of linearly damped harmonic oscillators with equation of motion
\be
\ba{l}
\dot{x} =p,\\
\dot{p} = -x -bp.
\ea
\ee

Since all fluid particles obey these equations of motion, continuity equation yields
\be
\rho = c K e^{bt},
\ee
where $c$ is a constant.

Writing Bernoulli's law
\be
\frac{1}{2} c K e^{bt} (x^2+p^2 +2 b p x + b^2 p^2) +P = \const,
\ee
and remembering that $E \sim e^{-bt}$ we see
\be
P=P_0 - c K e^{bt} (b x p + \frac{1}{2} b^2 p^2).
\ee

Fortunately we know solutions to the damped oscillator, being $x \sim e^{-\frac{b}{2}t}(\cos (\omega t+ \delta))$, hence $p \sim  e^{-\frac{b}{2}t} (\cos (\omega t + \delta) - \omega \sin (\omega t+\delta))$. Thus we see there is no danger of the variable part of pressure growing to infinity. Provided that engaged constants obey
\be
P_0 -c K A_0^2 \big[ (b+\frac{b^2}{2}) \cos^2 \delta -(\omega b + b^2) \sin \delta \cos \delta + \frac{1}{2} b^2 \omega^2 \sin^2 \delta \big] \geq 0,
\ee
where $A_0$ is initial amplitude, pressure is always positive. Notice that for different $K$ (initial energy) this demand can somewhat change quantitatively.

\section{Particular non-potential systems}

Here we place a sequence of illustrative examples, treatable along the lines of presented approach. Their objective is to show applicability of the invented framework to low-dimensional systems. As mentioned in an introduction, general statement with the proof will be published elsewhere.

\emph{We simplify the notation in this section by suppressing $t$ occurring explicitly in all reservoir integrals.}

\begin{Ex} {\bf{Van der Pol oscillator}}

Van der Pol oscillator is a system that arises as some generalization of a RLC circuit (through so-called Li\'enard form equation of VdP oscillator \cite{HK}) and its equations of motion are 
\be
\ba{l}
\dot{x} =p,\\
\dot{p} = -x + \varepsilon (1-x^2) p.
\ea
\ee
Non-potential generator of motion is
\be
K = \frac{1}{2} (x^2 +p^2) + w,
\ee
where the reservoir variable is given by
\be
w=\varepsilon \int_{t_0}^t (1-x(t')^2) p(t') dx(t'),
\ee
turning $K$ into effectively conserved quantity.
\end{Ex}

\begin{Ex} {\bf{Brusselator}}

Brusselator is the dynamical system modeling auto-catalytic reaction network \cite{Ku}
\be
\label{br}
A \rightarrow X,\\
2X+Y \rightarrow 3X,\\
B+X \rightarrow Y+D,\\
X \rightarrow E.
\ee

We claim that substrates $A,B$ are abundant in the environment, so we can denote their concentrations $a,b$ as being constant. We treat $X$ and $Y$ species concentrations as dynamical variables. From (\ref{br}), with the use of mass-action principle, we get
\be
\label{br-diff}
\ba{l}
\dot{x} = a + x^2 y-bx-x,\\
\dot{y} = bx-x^2y.
\ea
\ee
Short look at these confirms there is no resource function $\mathcal{M}$ in the usual sense; the non-potential generator of motion becomes
\be
K = ay-\frac{1}{2} x^2 y + w + z,
\ee
where the reservoir variables are given by
\be
\ba{l}
w=\int_{t_0}^t x(t')^2 y(t') dx(t'),\\
z=\int_{t_0}^t (x^2(t') y(t')-bx(t')- x(t')) dy(t'),
\ea
\ee
so that $\dot{K}=0$.
\end{Ex}

Note that (\ref{br-diff}) has fixed point at $(a,\frac{b}{a})$. This equilibrium is unstable when $b>1+a^2$, if $b<1+a^2$ it is stable. Case $b=1+a^2$ presents some doubts: in this situation, the origin appears to be the center (from the procedure of linearization), however we know that if the dimension (here: number of dependent variables) of ODE system $n \geq 2$, then Hartman-Grobman theorem on linearization often fails at predicting existence of the centre \cite{HK,HS,St}. 

\begin{Ex} {\bf{Lotka-Volterra system}}

Lotka-Volterra model describes basic predator-prey interaction (with linear response)
\be
\label{LV}
\ba{l}
\dot{u} = u-\alpha u v,\\
\dot{v} = - \beta v + uv,
\ea
\ee
where $u$ is prey concentration in an environment, $v$ is predator concentration, $\alpha$ being the rate at which consumption af prey by a predator proceeds, and $\beta$ is the death rate of a predator. Note that we choose unit rate for birth of prey and predator feed on prey.

We can write down these equations as a Poisson system
\be
\dot{\pmb{x}} = B(\pmb{x}) \nabla \mathcal{M},\qquad B(\pmb{x}) = \left( \ba{ll}\quad  0 & uv \\ -uv & 0 \ea \right)
\ee
claiming that $\pmb{x} = (u,v)^T, \nabla=(\partial_u, \partial_v)^T$ and the resource function is
\be
\label{lvm}
\mathcal{M} = \beta \ln u + \ln v - u -\alpha v.
\ee
\end{Ex}
We should observe that such formulated LV problem is given on the phase space $\mathbb{R}_+^2$, and as a Poisson system has compressible phase fluid: $\textrm{div} \dot{\pmb{x}}  = 1-\beta-\alpha v + u$ con. rem. (\ref{ci}).

We use that the Poisson structure as even-dimensional and non-degenerate, so we can bring the system to its canonical form by transformation $u=e^q, v=e^p$,where equations of motion becomes
\be
\ba{l}
\dot{q} = 1-\alpha e^p,\\
\dot{p} = -\beta +e^q,
\ea
\ee
with incompressible phase-fluid on $\mathbb{R}^2$ symplectic phase-space with separable resorce function $\mathcal{M} = p - \alpha e^p+\beta q-e^q.$

Note that LV Poisson system would become of canonical form also with
\be
\label{lvk}
\dj K = u(1-\alpha v) dv + v(\beta-u)du.
\ee
Moreover we have
\be
\nabla K^T B(\pmb{x}) \nabla \mathcal{M} = 0 = \nabla \mathcal{M}^T J \nabla K,
\ee
hence integrals of motion are commuting in terms of each Poisson structure, but this only preserves equilibria. What is more important we have
\begin{cor}
Transition from Poisson dynamics governed by the resource function $\mathcal{M}$ in (\ref{lvm}) to canonical form evolution of which is dictated by (\ref{lvk}) preserves Poisson bracket of a target function with generator of motion (with its coupled matrix structure).
\end{cor}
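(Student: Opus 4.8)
The plan is to observe that both descriptions of the Lotka--Volterra field produce one and the same vector field, so that contracting with the differential of an arbitrary target function yields the two brackets at once. In detail: for a function $f$ on $\mathbb{R}^2_+$ the Poisson bracket with the generator is $\{f,\mathcal{M}\}=(\nabla f)^T B\,\nabla\mathcal{M}$, while on the canonical side, using the generalised Hamiltonian field $X_K=\pi(\cdot,\dj K)$ of Section~4, it reads $\{f,K\}=(\nabla f)^T J\,\dj K$ with $\dj K$ taken from \rf{lvk}. Since $\dot{\pmb{x}}=B\,\nabla\mathcal{M}=\pmb{f}$ (the Poisson form of the system) and likewise $\dot{\pmb{x}}=J\,\dj K=\pmb{f}$ (the evolution dictated by \rf{lvk}), the two matrix--generator products coincide identically on $\mathbb{R}^2_+$, and the corollary reduces to contracting $B\,\nabla\mathcal{M}=J\,\dj K$ with $(\nabla f)^T$.

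To make the mechanism transparent I would record the algebraic reason for this coincidence: on $\mathbb{R}^2_+$ one has $B=(uv)\,J$, i.e. the Lotka--Volterra Poisson matrix is a conformal rescaling of the canonical one by the positive scalar $uv$, and correspondingly $\dj K=(uv)\,d\mathcal{M}$ for $\mathcal{M}$ as in \rf{lvm} --- a one-line componentwise check. The factor $uv$ is not closed against $d\mathcal{M}$, which is precisely why $\dj K$ is a genuine (pure) Pfaffian form rather than an exact differential, in keeping with the effectively conserved status of $K$. Then
\be
J\,\dj K=J\big((uv)\,d\mathcal{M}\big)=(uv)\,J\,\nabla\mathcal{M}=B\,\nabla\mathcal{M},
\ee
which is the identity used above. (In two dimensions any antisymmetric structure obeys the Jacobi identity automatically, so both $\{\cdot,\cdot\}$ are genuine Poisson brackets, although the argument does not require this.)

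It follows that, for every $f\in\mathcal{C}^1(\mathbb{R}^2_+)$,
\be
\{f,\mathcal{M}\}=(\nabla f)^T B\,\nabla\mathcal{M}=(\nabla f)^T J\,\dj K=\{f,K\},
\ee
the common value being $\dot f$ along the flow; thus the bracket of any function with the generator, taken together with its coupled structure matrix, is left invariant by the transition. Two comments close the argument. The transition is \emph{not} a Poisson isomorphism --- because $B\neq J$ it does not preserve arbitrary brackets $\{f,g\}$ --- so the statement concerns only the distinguished pairing with the generator. And the sole non-routine point, rather than a genuine obstacle, is the recognition that the non-exactness of $\dj K$ is unavoidable here: one cannot replace it by an honest $dK$, since $(uv)\,d\mathcal{M}$ is not closed; the higher-dimensional analogue will accordingly rest on $B$ being conformally, or more generally suitably, related to a constant $J$ along the symplectic leaf --- the structural input deferred to the sequel.
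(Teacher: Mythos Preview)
Your argument is correct and matches the paper's own line: the paper merely remarks that the claim is ``easily verifiable on case-to-case basis, provided that the Poisson bracket of $\mathcal{M}$ with coordinates $u,v$ is preserved'', which is exactly your observation that $B\,\nabla\mathcal{M}=J\,\dj K$ (the two coordinate brackets with the generator are the equations of motion, hence coincide) and then the chain rule extends this to any $f$. Your additional observation that $B=(uv)\,J$ and $\dj K=(uv)\,d\mathcal{M}$ supplies a clean structural reason for the coincidence that the paper does not spell out, and your remark that the transition is not a full Poisson isomorphism is a useful caveat; otherwise the two arguments are the same.
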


This remark is easily verifiable on case-to-case basis, provided that the Poisson bracket of $\mathcal{M}$ with coordinates $u,v$ is preserved.

Knowing that non-potential generator $K$ provokes anomalies of the vector fields to occur, we expect, and then obtain
\be
[X_f,X_K]_L = - X_{\{f,K\}} + \textrm{div} (X_K) X_f.
\ee

\begin{Ex} {\bf{Robertson reactions}}
\label{rn}

Reaction network
\be
X \overset{a}{\rightarrow} Y,\qquad Y+Y\overset{b}{\rightarrow} Y + Z,\qquad Y+Z \overset{c}{\rightarrow} X+Z
\ee
is a system of auto-catalytic reactions where $a,b,c$ are reaction rates.

Mass-action law gives system clearly expressible in gradient form ($\pmb{x}=(x,y,z)^T$)
\be
\dot{\pmb{x}} = B(\pmb{x}) \nabla H, \quad B(\pmb{x}) = \left( \ba{lll} \qquad \quad 0 & cyz +b y^2 & - ax -by^2\\ -cyz-by^2 &\qquad  0 & \qquad ax \\ \quad ax+by^2 & \quad -ax & \qquad 0  \ea \right)
\ee
with conserved $H = x+y+z$ (classical rule of mass conservation). Note additionally, that $B(\pmb{x})$ does not obey Jacobi's identity, although its skew-symmetry itself guarantees conservation property \cite{QMR,QC}.

We are able to write down the system in different form
\be
\dot{\pmb{x}} = \varepsilon \nabla K,
\ee
$\varepsilon$ being totally anti-symmetric Cartesian-tensor of order $3$ and
\be
\label{k1}
K = -\frac{1}{2} ax^2-\frac{1}{3} by^3 - a\int_{t_0}^t xdy - \int_{t_0}^t (by^2 + cyz) dz,
\ee
therefore we need a pair of reservoirs. In this form system is Poisson one: anti-symmetric structure matrix obeys Jabobi identity, moreover, the system admits Casimir function $H$, since it is obvious that $\varepsilon \nabla H = 0$, hence we can proceed the construction of Darboux coordinates on a single symplectic leaf of the system, e.g. $(y,z)$, where $x=m_0-y-z$, $m_0$ constant.

System reduces to
\be
\ba{l}
\dot{y} = \mu - a y-az -by^2-cyz = -K_z,\\
\dot{z} = by^2 = K_y,
\ea
\ee
where $\mu = a m_0$. To cast above system in gradient form we need only single reservoir
\be
\label{k2}
K = -\mu z +\frac{1}{2} a z^2 + \frac{1}{3} by^3 + \int_{t_0}^t (ay+by^2+cyz) dz
\ee
and it is explicitly of canonical form.
\end{Ex}

It is worth stressing that we can apply Casimir function to generator governing evolution of the system (\ref{k1}) to reduce number of variables, but the formula will be different from that obtained applying given Casimir to equations of motion, and  then finding the reduced generator (\ref{k2}). Results are obviously unequal, but their differentials are cohomologically equivalent \cite{A1}.

\section{Summary and application perspective}

We obtained new kind of conserved quantity, for distinction called \emph{effectively} conserved quantity. Its existence is not a consequence of equations of motion alone; it is preserved due to equations of motion \emph{after} adjoining reservoir(s) to the system.

Constant quantities have reduced order of differential equation, each of them by one. Effectively conserved quantities can only turn differential equations into integro-differential ones, hence they are not of huge analytical advantage, although they can help us to reduce dynamical system to canonical form, as in the example (\ref{rn}).

Provided conserved quantity (even effective one!) we can construct geometric integrators of various types for non-potential systems. This is a huge novelty, since up till now, this class of systems refused GNI treatment (e.g. \cite{B,CR,DM,FS,M,QC}, for a versatile survey on the subject check \cite{HLW}). This yields hopes connected not only with presented framework, but also parallel approach to non-autonomous systems being constructed and multi-gradient systems being of central interest in Nambu mechanics \cite{N}.

Additionally, it is clearly implied that every non-degenerate Poisson system (even a non-potential one!) admits canonical representation in which effectively conserved quantity can be perceived as a generator of motion.

\end{document}